\newcommand{\keywords}[1]{\par\addvspace\baselineskip
\noindent\keywordname\enspace\ignorespaces#1}
\newcommand{\Rset}[2]{\ifthenelse{\equal{#2}{1}}{\in \mathbb{R}^{\mathrm{#1}}}{\in \mathbb{R}^{\mathrm{#1 \times #2}}}}
\newcommand{\Cset}[2]{\ifthenelse{\equal{#2}{1}}{\in \mathbb{C}^{\mathrm{#1}}}{\in \mathbb{C}^{\mathrm{#1 \times #2}}}}
\newcommand{\vect}[1]{\bm{\MakeLowercase{#1}}}
\newcommand{\mtrx}[1]{\bm{\MakeUppercase{#1}}}
\newcommand{\norm}[2]{\|#1\|_{#2}}
\newcommand{\htransp}[1]{#1^{\mathsf{H}}}
\DeclareMathOperator{\sign}{sign} 
\DeclareMathOperator*{\argmin}{arg\,min\,}
\DeclareMathOperator*{\minim}{minimize\,}
\DeclareMathOperator{\Mr}{\mtrx{M}_r}
\DeclareMathOperator{\Mcn}{\mtrx{M}_c^-}
\DeclareMathOperator{\Mcp}{\mtrx{M}_c^+}
\begin{document}

\mainmatter  

\title{Sparsity and cosparsity for audio declipping: a flexible non-convex approach}

\titlerunning{Sparse and cosparse regularizations for audio declipping}

%
%

\author{Sr\dj an Kiti\'c \and Nancy Bertin \and R\'emi Gribonval \thanks{This work was supported in part by the European Research Council, PLEASE project (ERC-StG-2011-277906).}}

\authorrunning{Sparse and cosparse regularizations for audio declipping}

\institute{Inria/IRISA, Rennes, PANAMA team\\
\mailsa, \mailsb, \mailsc\\
\url{team.inria.fr/panama}}

%
%

\toctitle{Lecture Notes in Computer Science}
\tocauthor{Authors' Instructions}
\maketitle

\begin{abstract}
This work investigates the empirical performance of the sparse synthesis versus sparse analysis regularization for the ill-posed inverse problem of audio declipping. We develop a versatile non-convex heuristics which can be readily used with both data models. Based on this algorithm, we report that, in most cases,
the two models perform almost similarly in terms of signal enhancement. However, the analysis version is shown to be amenable for real time audio processing, when certain analysis operators are considered. Both versions outperform state-of-the-art methods in the field, especially for the severely saturated signals.

\keywords{Clipping, audio, sparse, cosparse, non-convex, real-time}
\end{abstract}

\section{Introduction}\label{sec:introduction}

Clipping, or magnitude saturation, is a well-known problem in signal processing, from audio \cite{adler2012audio,kahrs1998applications} to image processing \cite{aydin2008dynamic,naik2003hue} and digital communications \cite{li1997effects}. The focus of this work is audio declipping, to restore clipped audio signals. 
Audio signals become saturated usually during acquisition, reproduction or A/D conversion. The perceptual manifestation of clipped audio depends on the level of clipping degradation and the audio content. In case of mild to moderate clipping, the listener may notice occasional ``clicks and pops'' during playback. When clipping becomes severe, the audio content is usually perceived as if it was contaminated with a high level of additive noise, which may be explained by the introduction of a large number of harmonics caused by the discontinuities in the degraded signal. In addition to audible artifacts, some recent studies have shown that clipping has a negative impact on Automatic Speech Recognition (ASR) performance \cite{tachioka2014speech,harvilla2014least}.

%


In the following text, a sampled audio signal is represented by the vector $\vect{x} \Rset{n}{1}$ and its clipped version is denoted by $\vect{y} \Rset{n}{1}$. The latter can be easily deduced from $\vect{x}$ through the following nonlinear observation model, called \textit{hard clipping}:
\begin{equation}\label{eqClipping}
	\vect{y}_{\mathrm{i}} = \begin{cases}
			\vect{x}_{\mathrm{i}} & \text{for } |\vect{x}_{\mathrm{i}}| \leq \tau,\\
			\sign(\vect{x}_{\mathrm{i}}) \tau & \text{otherwise.}
		\end{cases}
\end{equation}
While idealized, this clipping model is a convenient approximation allowing to clearly distinguish the clipped parts of a signal  by identifying the samples having the highest absolute magnitude. Indices corresponding to ``reliable'' samples of $\vect{y}$ (not affected by clipping) are indexed by $\Omega_r$, while $\Omega_c^+$ and $\Omega_c^-$ index the clipped samples with positive and negative magnitude, respectively.

Our goal is to estimate the original signal $\vect{x}$ from its clipped version $\vect{y}$, \emph{i.e.} to ``declip'' the signal $\vect{y}$. Ideally, the estimated signal $\hat{\vect{x}}$ should satisfy natural magnitude constraints in order to be consistent with the clipped observations. Thus, we seek an estimate $\hat{\vect{x}}$ which fulfills the following criteria:
\begin{align}\label{eqConsistency}
	\Mr \hat{\vect{x}} & = \Mr \vect{y} &
	\Mcp \hat{\vect{x}} & \geq \Mcp \vect{y} &
	\Mcn \hat{\vect{x}} & \leq \Mcn \vect{y},
\end{align}
where the matrices $\Mr$, $\Mcn$ and $\Mcp$ are \textit{restriction operators}. These are simply row-reduced identity matrices used to extract the vector elements indexed by the sets $\Omega_r$, $\Omega_c^+$ and $\Omega_c^-$, respectively. We write the constraints~\eqref{eqConsistency} as $\hat{\vect{x}} \in \Gamma(\vect{y})$.

Obviously, consistency alone is not sufficient to ensure uniqueness of $\hat{\vect{x}}$, thus one needs to further regularize the inverse problem. The declipping inverse problem is amenable to several regularization approaches proposed in the literature, such as based on linear prediction \cite{janssen1986adaptive}, minimization of the energy of high order derivatives \cite{harvilla2014least}, psychoacoustics \cite{defraene2013declipping}, sparsity \cite{adler2012audio,kitic2013consistent,siedenburg2014audio,defraene2013declipping,weinstein2011recovering} and cosparsity \cite{kiticaudio} (where we introduced a simplified version of the analysis-based algorithm presented in this paper). The last two \textit{priors}, briefly explained in the next section, enable some state-of-the-art methods in clipping restoration. 

In this paper we empirically compare the performance of the two priors, by means of a declipping algorithm which is easily adaptable to both cases. Our findings are that the sparsity-based version of the algorithm marginally outperforms the cosparsity-based one, but this fact may be attributed to the choice of the stopping criterion. On the other hand, for a class of analysis operators, the cosparsity-based algorithm has very low complexity per iteration, which makes it suitable for real-time audio processing.

\section{The sparse synthesis and sparse analysis data models}\label{secDataModels}

It is well-known that the energy of audio signals is often concentrated either in a small number of frequency components, or in short temporal bursts \cite{plumbley2010sparse}, \emph{i.e.} they are (approximately) time-frequency sparse. 
The traditional sparse synthesis viewpoint \cite{eldar2012compressed,foucart2013mathematical} on this property is that audio signals are well approximated by linearly combining few columns of a \textit{dictionary} matrix $\mtrx{D} \Cset{n}{d}$, $\mathrm{d}\geq{\mathrm{n}}$ such as a Gabor dictionary, \emph{i.e.} $\vect{x} \approx \mtrx{D} \vect{z}$, where $\vect{z} \Cset{d}{1}$ is sparse. A less explored alternative is the cosparse analysis perspective \cite{nam2013cosparse} asserting that $\mtrx{A}\vect{x}$ is approximately sparse, with $\mtrx{A} \Cset{p}{n}$, $\mathrm{p \geq n}$ and \textit{analysis operator}. The two data models are different \cite{elad2007analysis,nam2013cosparse}, unless $\mathrm{p = n}$ and $\mtrx{A} = \mtrx{D}^{-1}$. Finding the sparsest (in the sense of synthesis or analysis) vector $\vect{x}$ satisfying  constraints such as \eqref{eqConsistency} is in general intractable, but convex or greedy heuristics provide efficient algorithms with certain performance guarantees \cite{eldar2012compressed,foucart2013mathematical,nam2013cosparse}.



\section{Algorithms}\label{secAlgorithm}

Some empirical evidence \cite{defraene2013declipping,weinstein2011recovering} suggests that standard $\ell_1$ convex relaxation does not perform well for sparse synthesis regularization of the declipping inverse problem. Therefore, we developed an algorithmic framework based on non-convex heuristics, that can be straightforwardly parametrized for use in both the synthesis and the analysis setting. To allow for possible real-time implementation, the algorithms operate on individual blocks (chunks) of audio data, which is subsequently resynthesized by means of the overlap-add scheme. 

The heuristics should approximate the solution of the following synthesis- and analysis-regularized inverse problems\footnote{Observe that if $\mtrx{D}$ and $\mtrx{A}$ are unitary matrices, the two problems become identical.}:
\begin{align}
	\minim_{\vect{x}, \vect{z}}  & \norm{\vect{z}}{0} + \mathbf{1}_{\Gamma(\vect{y})}(\vect{x}) + \mathbf{1}_{\ell_{2} \leq \varepsilon}(\vect{x} - \mtrx{D}\vect{z}) \label{eqNPsparse} \\
	\minim_{\vect{x}, \vect{z}}  & \norm{\vect{z}}{0} + \mathbf{1}_{\Gamma(\vect{y})}(\vect{x}) + \mathbf{1}_{\ell_{2} \leq \varepsilon}(\mtrx{A}\vect{x} - \vect{z}) \label{eqNPcosparse}.	
\end{align}
The indicator function $\mathbf{1}_{\Gamma(\vect{y})}$ of the constraint set $\Gamma(\vect{y})$ forces the estimate $\vect{x}$ to satisfy \eqref{eqConsistency}. The additional penalty $\mathbf{1}_{\ell_{2} \leq \varepsilon}$ is a \textit{coupling} functional. Its role is to enable the end-user to explicitly bound the distance between the estimate and its sparse approximation. These are difficult optimization problems: besides inherited NP-hardness, the two problems are also non-convex and non-smooth.


We can represent \eqref{eqNPsparse} and \eqref{eqNPcosparse} in an equivalent form, using the indicator function on the cardinality of $\vect{z}$ and an integer-valued unknown $\mathrm{k}$:
\begin{equation}
	\minim_{\vect{x}, \vect{z}, \mathrm{k}}   \mathbf{1}_{\ell_0 \leq \mathrm{k}} (\vect{z}) +  \mathbf{1}_{\Gamma(\vect{y})}(\vect{x}) + F_c(\vect{x},\vect{z}) 
	\label{eq:NPequiv}
\end{equation}
where $F_{c}(\vect{x},\vect{z})$ is the appropriate coupling functional. For a fixed $\mathrm{k}$, problem~\eqref{eq:NPequiv} can be seen as a variant of the \textit{regressor selection} problem, which is (locally) solvable by the Alternating Direction Method of Multipliers (ADMM) \cite{boyd2011distributed,bertsekas1999nonlinear}: 
\begin{align}\label{eqRegressorSelection}
\begin{aligned}
	& \textrm{Synthesis version}\\
	\bar{\vect{z}}^{\mathrm{(i+1)}} = & \mathcal{H}_{\mathrm{k}}(\hat{\vect{z}}^{\mathrm{(i)}} + \vect{u}^{\mathrm{(i)}})\\
	\hat{\vect{z}}^{\mathrm{(i+1)}} = & \argmin_{\vect{z}} \norm{\vect{z} - \bar{\vect{z}}^{(\mathrm{i+1})} + \vect{u}^{\mathrm{(i)}}}{2}^2  \\
	& \text{subject to } \mtrx{D}\vect{z} \in \Gamma(\vect{y}) \\
	\vect{u}^{\mathrm{(i+1)}} = & \vect{u}^{\mathrm{(i)}} + \hat{\vect{z}}^{\mathrm{(i+1)}} - \bar{\vect{z}}^{\mathrm{(i+1)}}
\end{aligned}
\quad \vline \quad
\begin{aligned}
	& \textrm{Analysis version}\\
	\bar{\vect{z}}^{\mathrm{(i+1)}} = & \mathcal{H}_{\mathrm{k}}(\mtrx{A} \hat{\vect{x}}^{\mathrm{(i)}} + \vect{u}^{\mathrm{(i)}})\\
	\hat{\vect{x}}^{\mathrm{(i+1)}} = & \argmin_{\vect{x}} \norm{\mtrx{A}\vect{x} - \bar{\vect{z}}^{(\mathrm{i+1})} + \vect{u}^{\mathrm{(i)}}}{2}^2  \\
	& \text{subject to } \vect{x} \in \Gamma(\vect{y}) \\
	\vect{u}^{\mathrm{(i+1)}} = & \vect{u}^{\mathrm{(i)}} + \mtrx{A} \hat{\vect{x}}^{\mathrm{(i+1)}} - \bar{\vect{z}}^{\mathrm{(i+1)}}.
\end{aligned}
\end{align}
The operator $\mathcal{H}_{\mathrm{k}}(\vect{v})$ performs hard thresholding, \emph{i.e.} sets all but $\mathrm{k}$ highest in magnitude components of $\vect{v}$ to zero. Unlike the standard regressor selection algorithm, for which the ADMM multiplier \cite{boyd2011distributed} needs to be appropriately chosen to avoid divergence, the above formulation is independent of its value.

In practice, it is difficult to guess the optimal value of $\mathrm{k}$ beforehand. An adaptive estimation strategy is to periodically increase $\mathrm{k}$ (starting from some small value), perform several runs of \eqref{eqRegressorSelection} for a given $\mathrm{k}$ and repeat the procedure until the constraint embodied by $F_c$ is satisfied. This corresponds to \textit{sparsity relaxation}: as $\mathrm{k}$ gets larger, the estimated $\vect{z}$ becomes less sparse.

The proposed algorithm, dubbed \textit{SParse Audio DEclipper (SPADE)}, comes in two flavors. The pseudocodes for the synthesis version (``\textit{S-SPADE}'') and for the analysis version (``\textit{A-SPADE}'') are given in Algorithm \ref{S-SPADE} and Algorithm \ref{A-SPADE}.
\def\algowidth {5.8cm}

\vspace{-2em}

\begin{minipage}[t]{\algowidth}
\vspace{0pt}
\begin{algorithm}[H]
	\caption{S-SPADE}
	\label{S-SPADE}
	\begin{algorithmic}[1]
		\REQUIRE $\mtrx{D}, \vect{y}, \Mr, \Mcp, \Mcn, \mathrm{s}, \mathrm{r}, \varepsilon$
		\STATE $\hat{\vect{z}}^{(0)} = \htransp{\mtrx{D}} \vect{y}, \vect{u}^{(0)} = \vect{0},\mathrm{i}=1, \mathrm{k} = \mathrm{s}$
		\STATE $\bar{\vect{z}}^{(\mathrm{i})} = \mathcal{H}_{\mathrm{k}} \left( \hat{\vect{z}}^{(\mathrm{i}-1)} + \vect{u}^{(\mathrm{i}-1)} \right)$		
		\STATE $\hat{\vect{z}}^{(\mathrm{i})} = \text{arg min}_{\vect{z}} \norm{ \vect{z} - \bar{\vect{z}}^{(\mathrm{i})} + \vect{u}^{(\mathrm{i}-1)} }{2}^2$ 
		\\ s.t. $\vect{x} = \mtrx{D} \vect{z} \in \Gamma$
		\IF{$\norm{\hat{\vect{z}}^{(\mathrm{i})} - \bar{\vect{z}}^{(\mathrm{i})}}{2} \leq \varepsilon$}
			\STATE terminate
		\ELSE
			\STATE $\vect{u}^{(\mathrm{i})} =  \vect{u}^{(\mathrm{i}-1)} + \hat{\vect{z}}^{(\mathrm{i})} - \bar{\vect{z}}^{(\mathrm{i})}$		
			\STATE $\mathrm{i} \gets \mathrm{i} + 1$
			\IF{$i$ \text{mod} $\mathrm{r} = 0$} 
				\STATE $\mathrm{k} \gets \mathrm{k} + \mathrm{s}$
			\ENDIF
			\STATE go to 2
		\ENDIF
		\RETURN $\hat{\vect{x}} = \mtrx{D} \hat{\vect{z}}^{(\mathrm{i})}$
	\end{algorithmic}	
\end{algorithm}
\end{minipage}%
\begin{minipage}[t]{\algowidth}
\vspace{0pt}
\begin{algorithm}[H]
	\caption{A-SPADE}
	\label{A-SPADE}
	\begin{algorithmic}[1]
		\REQUIRE $\mtrx{A}, \vect{y}, \Mr, \Mcp, \Mcn, \mathrm{s}, \mathrm{r}, \varepsilon$
		\STATE $\hat{\vect{x}}^{(0)} = \vect{y}, \vect{u}^{(0)} = \vect{0},\mathrm{i}=1, \mathrm{k} = \mathrm{s}$
		\STATE $\bar{\vect{z}}^{(\mathrm{i})} = \mathcal{H}_{\mathrm{k}} \left( \mtrx{A} \hat{\vect{x}}^{(\mathrm{i}-1)} + \vect{u}^{(\mathrm{i}-1)} \right)$		
		\STATE $\hat{\vect{x}}^{(\mathrm{i})} = \text{arg min}_{\vect{x}} \norm{ \mtrx{A} \vect{x} - \bar{\vect{z}}^{(\mathrm{i})} + \vect{u}^{(\mathrm{i}-1)} }{2}^2$\\
		 s.t. $\vect{x} \in \Gamma$
		\IF{$\norm{\mtrx{A} \hat{\vect{x}}^{(\mathrm{i})} - \bar{\vect{z}}^{(\mathrm{i})}}{2} \leq \varepsilon$}
			\STATE terminate
		\ELSE
			\STATE $\vect{u}^{(\mathrm{i})} =  \vect{u}^{(\mathrm{i}-1)} + \mtrx{A} \hat{\vect{x}}^{(\mathrm{i})} - \bar{\vect{z}}^{(\mathrm{i})}$		
			\STATE $\mathrm{i} \gets \mathrm{i} + 1$
			\IF{$i$ \text{mod} $\mathrm{r} = 0$} 
				\STATE $\mathrm{k} \gets \mathrm{k} + \mathrm{s}$
			\ENDIF
			\STATE go to 2
		\ENDIF
		\RETURN $\hat{\vect{x}} = \hat{\vect{x}}^{(\mathrm{i})}$
	\end{algorithmic}	
\end{algorithm}
\end{minipage}

\vspace{1em}

The relaxation rate and the relaxation stepsize are controlled by the integer-valued parameters $\mathrm{r}>0$ and $\mathrm{s}>0$, while the parameter $\varepsilon>0$ is the stopping threshold.

\newtheorem{lmConvergence}{Lemma}
\begin{lmConvergence}
	 The SPADE algorithms terminate in no more than $\mathrm{i}=\lceil \mathrm{d}\mathrm{r}/\mathrm{s} + 1 \rceil$ iterations.
\end{lmConvergence}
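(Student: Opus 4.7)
The crux is that the hard-threshold $\mathcal{H}_k$ reduces to the identity on its ambient space (of dimension $d$ for synthesis, $p$ for analysis) once $k$ attains that dimension. The plan proceeds in three parts.

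First, I would control the schedule of $k$. Starting from $k = s$ and being incremented by $s$ every $r$-th iteration, a direct count shows that $k \geq d$ holds by some iteration $n^\star \leq \lceil dr/s \rceil$.

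Next comes the core step: a one-iteration collapse of the dual variable $\vect{u}$ once $k \geq d$. For such $i \geq n^\star$, $\mathcal{H}_k$ acts as the identity on $\hat{\vect{z}}^{(i-1)} + \vect{u}^{(i-1)}$, so step 2 yields $\bar{\vect{z}}^{(i)} = \hat{\vect{z}}^{(i-1)} + \vect{u}^{(i-1)}$ and step 3 rewrites as
\begin{equation*}
\hat{\vect{z}}^{(i)} = \argmin_{\vect{z}} \norm{\vect{z} - \hat{\vect{z}}^{(i-1)}}{2}^2 \quad \text{s.t.} \quad \mtrx{D}\vect{z} \in \Gamma(\vect{y}).
\end{equation*}
Because $\hat{\vect{z}}^{(i-1)}$ itself satisfies $\mtrx{D}\hat{\vect{z}}^{(i-1)} \in \Gamma(\vect{y})$ --- by construction, as the feasible solution of the previous iteration's constrained optimization --- it is the unique minimizer, so $\hat{\vect{z}}^{(i)} = \hat{\vect{z}}^{(i-1)}$. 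The dual update then gives $\vect{u}^{(i)} = \vect{u}^{(i-1)} + \hat{\vect{z}}^{(i)} - \bar{\vect{z}}^{(i)} = 0$.

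Finally, at iteration $n^\star + 1$, substituting $\vect{u}^{(n^\star)} = 0$ into step 2 yields $\bar{\vect{z}}^{(n^\star+1)} = \hat{\vect{z}}^{(n^\star)}$, and the projection argument again gives $\hat{\vect{z}}^{(n^\star+1)} = \hat{\vect{z}}^{(n^\star)}$; the stopping residual $\norm{\hat{\vect{z}}^{(n^\star+1)} - \bar{\vect{z}}^{(n^\star+1)}}{2} = 0 \leq \varepsilon$ thus fires the termination test by iteration $n^\star + 1 \leq \lceil dr/s \rceil + 1 = \lceil dr/s + 1 \rceil$. The A-SPADE case transports verbatim upon substituting $\vect{z}$ by $\mtrx{A}\vect{x}$ and using $\hat{\vect{x}}^{(i-1)} \in \Gamma(\vect{y})$ in place of the feasibility of $\hat{\vect{z}}^{(i-1)}$. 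I anticipate no real obstacle; the sole delicate point is the projection-collapse, which rests entirely on the feasibility-by-construction of the previous primal iterate.
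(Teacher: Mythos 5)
Your proof is correct and follows essentially the same route as the paper's: once $\mathrm{k}$ reaches the ambient dimension, $\mathcal{H}_{\mathrm{k}}$ is the identity, the constrained least-squares step returns the previous (feasible-by-construction) iterate, the dual variable collapses to zero, and the residual vanishes one iteration later. You merely spell out the feasibility argument and the iteration count that the paper leaves implicit.
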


\begin{proof}
Once $\mathrm{k} \geq \mathrm{d}$, the hard thresholding operation $\mathcal{H}_{\mathrm{k}}$ becomes an identity mapping. Then, the minimizer of the constrained least squares step $3$ is $\hat{\vect{z}}^{(\mathrm{i-1})}$ (respectively, $\hat{\vect{x}}^{(\mathrm{i-1})}$) and the distance measure in the step $4$ is equal to $\norm{\vect{u}^{(\mathrm{i}-1)}}{2}$. But, in the subsequent iteration, $\vect{u}^{(\mathrm{i}-1)} = \vect{0}$ and the algorithm terminates.
\end{proof}

This bound is quite pessimistic: in practice, we observed that the algorithm terminates much sooner, which suggest that there might be a sharper upper bound on the iteration count.

\section{Computational aspects} 

The general form of the \textit{SPADE} algorithms does not impose restrictions on the choice of the dictionary nor the analysis operator. From a practical perspective, however, it is important that the complexity per iteration is kept low. The dominant cost of \textit{SPADE} is in the evaluation of the linearly constrained least squares minimizer step, whose computational complexity can be generally high. Fortunately, for some choices of $\mtrx{D}$ and $\mtrx{A}$ this cost is dramatically reduced.

Namely, if the matrix $\htransp{\mtrx{A}}$ forms a \textit{tight frame} ($\htransp{\mtrx{A}}\mtrx{A} = \zeta \mtrx{I}$), it is easy to show that the step $3$ of A-SPADE reduces to\footnote{Recall that the matrices $\Mr$, $\Mcp$ and $\Mcn$ are tight frames by design.}:
		\begin{align*}
			\vect{x}^{(\mathrm{i})} & = \mathcal{P}_{\Xi} \left( \frac{1}{\zeta} \htransp{\mtrx{A}} (\bar{\vect{z}}^{(\mathrm{i})} - \vect{u}^{(\mathrm{i}-1)}) \right), \text{ where: }\\
			\Xi & = \{ \vect{x} \mid 
			\left[ \begin{smallmatrix} 
				-\Mcp \\
				\Mcn
			\end{smallmatrix} \right] \vect{x} \leq
			\left[ \begin{smallmatrix}
			 -\Mcp  \\
			 \Mcn 
			 \end{smallmatrix} \right]
			 \vect{y} \text{ and }
			\Mr \vect{x} = \Mr \vect{y}  \}.	
		\end{align*}
The projection $\mathcal{P}_{\Xi}(\cdot)$ is straightforward and corresponds to component-wise mappings, thus the per iteration cost of the algorithm is reduced to the cost of evaluating matrix-vector products. 

On the other hand, for S-SPADE this simplification is not possible and the constrained minimization in step 3 needs to be computed iteratively. However, by exploiting the tight frame property of $\mtrx{D}=\htransp{\mtrx{A}}$ and the Woodbury matrix identity, one can build an efficient algorithm that solves this optimization problem with low complexity. 

Finally, the computational cost can be further reduced if the matrix-vector products with $\mtrx{D}$ and $\mtrx{A}$ can be computed with less than quadratic cost. Some transforms that support both tight frame property and fast product computation are also favorable in our audio (co)sparse context. Such well-known transforms are Discrete Fourier Transform, (Modified) Discrete Cosine Transform, (Modified) Discrete Sine Transform and Discrete Wavelet Transform, for instance.

\section{Experiments}

The experiments are aimed to highlight differences in signal enhancement performance between S-SPADE and A-SPADE, and implicitly, the sparse and cosparse data models. It is noteworthy that in the formally equivalent setting ($\mtrx{A} = \mtrx{D}^{-1}$), the two algorithms become identical. As a sanity-check, we include this setting in the experiments. The relaxation parameters are set to $\mathrm{r}=1$ and $\mathrm{s}=1$, and the stopping threshold is $\varepsilon=0.1$.

In addition to \textit{SPADE} algorithms, we also include Consistent IHT \cite{kitic2013consistent} and social sparsity declipping algorithm \cite{siedenburg2014audio} as representatives of state-of-the-art. The latter two algorithms use the sparse synthesis data model for regularizing the declipping inverse problem. Consistent IHT is a low-complexity algorithm based on famous Iterative Hard Thresholding for Compressed Sensing \cite{blumensath2009iterative}, while the social sparsity declipper is based on a structured sparsity prior \cite{kowalski2013social}.

As mentioned before, this work is not aimed towards investigating the appropriateness of various time-frequency transforms in the context of audio recovery, which is why we choose traditional Short Time Fourier Transform (STFT) for all experiments. We use sliding square-rooted Hamming window of size $1024$ samples with $75\%$ overlap. The redundancy level of the involved frames (corresponding to \textit{per-chunk} inverse DFT for the dictionary and forward DFT for the analysis operator) is $1$ (no redundancy), $2$ and $4$. The social sparsity declipper, based on Gabor dictionary, requires batch processing of the whole signal. We adjusted the temporal shift, the window and the number of frequency bins in accordance with previously mentioned STFT settings \footnote{We use the implementation kindly provided by the authors.}. 

For a measure of performance, we use a simple difference between \linebreak signal-to-distortion ratios of clipped ($\text{SDR}_{\vect{y}}$) and processed ($\text{SDR}_{\hat{\vect{x}}}$) signals:
\begin{align*}
\text{SDR}_{\vect{y}} = 20 \log_{10} \frac{\| \bigl[ \begin{smallmatrix} \Mcp\\ \Mcn \end{smallmatrix} \bigr] \vect{x} \|_2}{\| \bigl[ \begin{smallmatrix} \Mcp\\ \Mcn \end{smallmatrix} \bigr] \vect{x} - \bigl[ \begin{smallmatrix} \Mcp\\ \Mcn \end{smallmatrix} \bigr] \vect{y} \|_2},
\text{SDR}_{\hat{\vect{x}} } = 20 \log_{10} \frac{\| \bigl[ \begin{smallmatrix} \Mcp\\ \Mcn \end{smallmatrix} \bigr] \vect{x} \|_2}{\| \bigl[ \begin{smallmatrix} \Mcp\\ \Mcn \end{smallmatrix} \bigr] \vect{x} - \bigl[ \begin{smallmatrix} \Mcp\\ \Mcn \end{smallmatrix} \bigr] \hat{\vect{x}}  \|_2}
\end{align*}
Hence, only the samples corresponding to clipped indices are taken into account. Concerning \textit{SPADE}, this choice makes no difference, since the remainder of the estimate $\hat{\vect{x}}$ perfectly fits the observations $\vect{y}$. However, it may favor the other two algorithms that do not share this feature.

Audio examples consist of 10 music excerpts taken from RWC database \cite{goto2002rwc}, which significantly differ in tonal and vocal content. The excerpts are of approximately similar duration ($\sim10\text{s}$), and are sampled at $16\text{kHz}$ with $16\text{bit}$ encoding. The inputs are generated by artificially clipping the audio excerpts at five levels, ranging from severe ($\text{SDR}_{\vect{y}} = 1\text{dB}$) towards mild ($\text{SDR}_{\vect{y}} = 10\text{dB}$). 

According to the results presented in figure \ref{figResults}, the \textit{SPADE} algorithms yield highest improvement in SDR among the four considered approaches. As assumed, \linebreak S-SPADE and A-SPADE achieve similar results in a non-redundant setting, but when the overcomplete frames are considered, the synthesis version performs somewhat better. Interestingly, the overall best results for the analysis version are obtained for the twice-redundant frame, while the performance slightly drops for the redundancy four. This is probably due to the absolute choice of the parameter $\varepsilon$, and suggests that in the analysis setting, this value should be replaced by a relative threshold instead. In the non-redundant case, declipping by A-SPADE and Consistent IHT took (on the average) $3\text{min}$ and $7\text{min}$, respectively, while the other two algorithms were much slower\footnote{All algorithms were implemented in Matlab\textsuperscript{\circledR}, and run in single-thread mode.} (on the order of hours).

\section{Conclusion}

We presented a novel algorithm for non-convex regularization of the declipping inverse problem. The algorithm is flexible in terms that it can easily accommodate sparse (S-SPADE) or cosparse (A-SPADE) prior, and as such has been used to compare the recovery performance of the two data models. The empirical results are slightly in favor of the sparse synthesis data model. However, the analysis version does not fall far behind, which makes it attractive for practical applications. Indeed, due to the natural way of imposing clipping consistency constraints, it can be implemented in an extremely efficient way, even allowing for a real-time signal processing. Benchmark on real audio data demonstrates that both versions outperform considered state-of-the-art algorithms in the field.

\begin{figure}[H]
	\centering
	\includegraphics[scale=0.25]{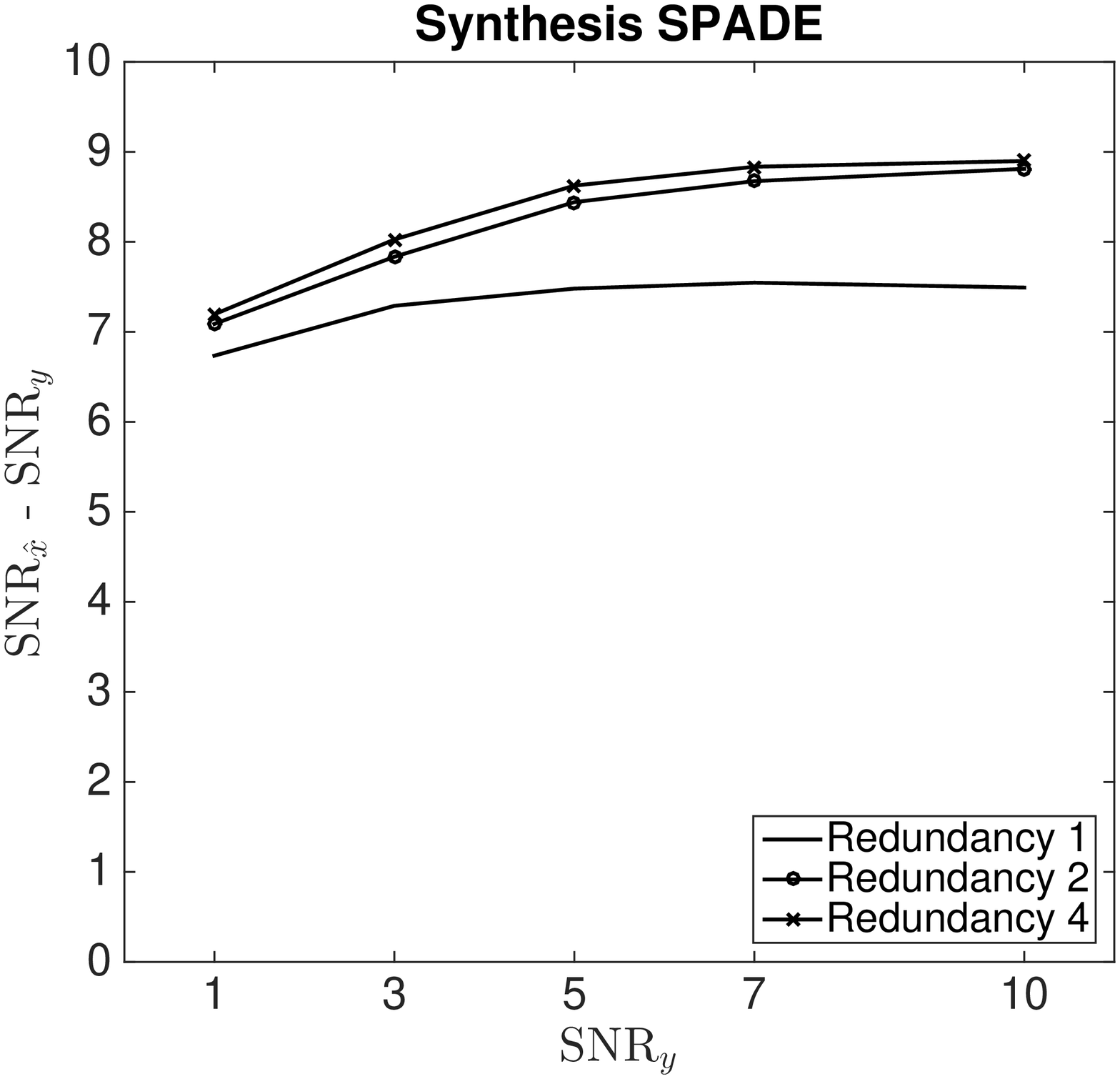}
	\includegraphics[scale=0.25]{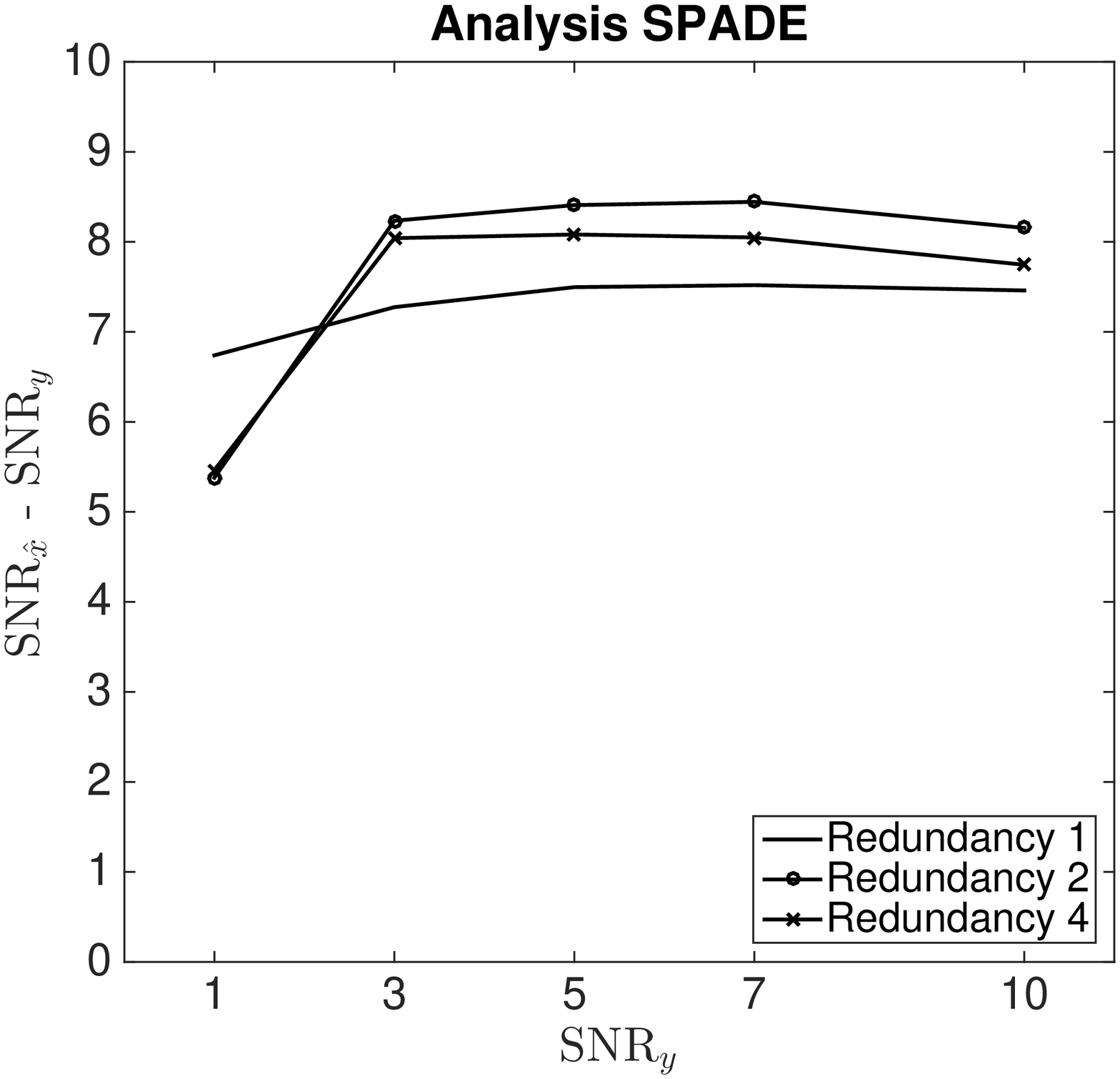}\\	
	\includegraphics[scale=0.25]{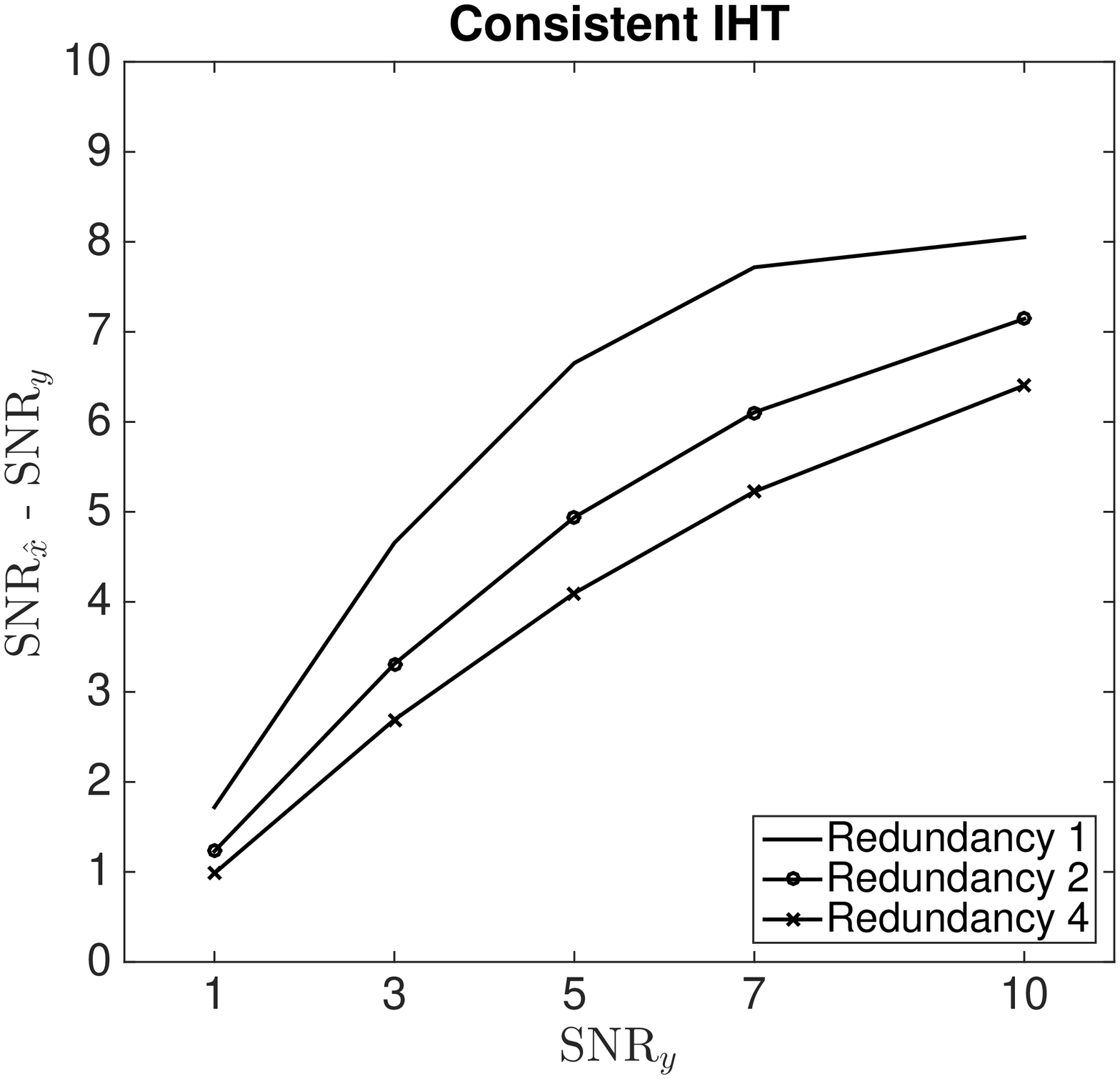}
	\includegraphics[scale=0.25]{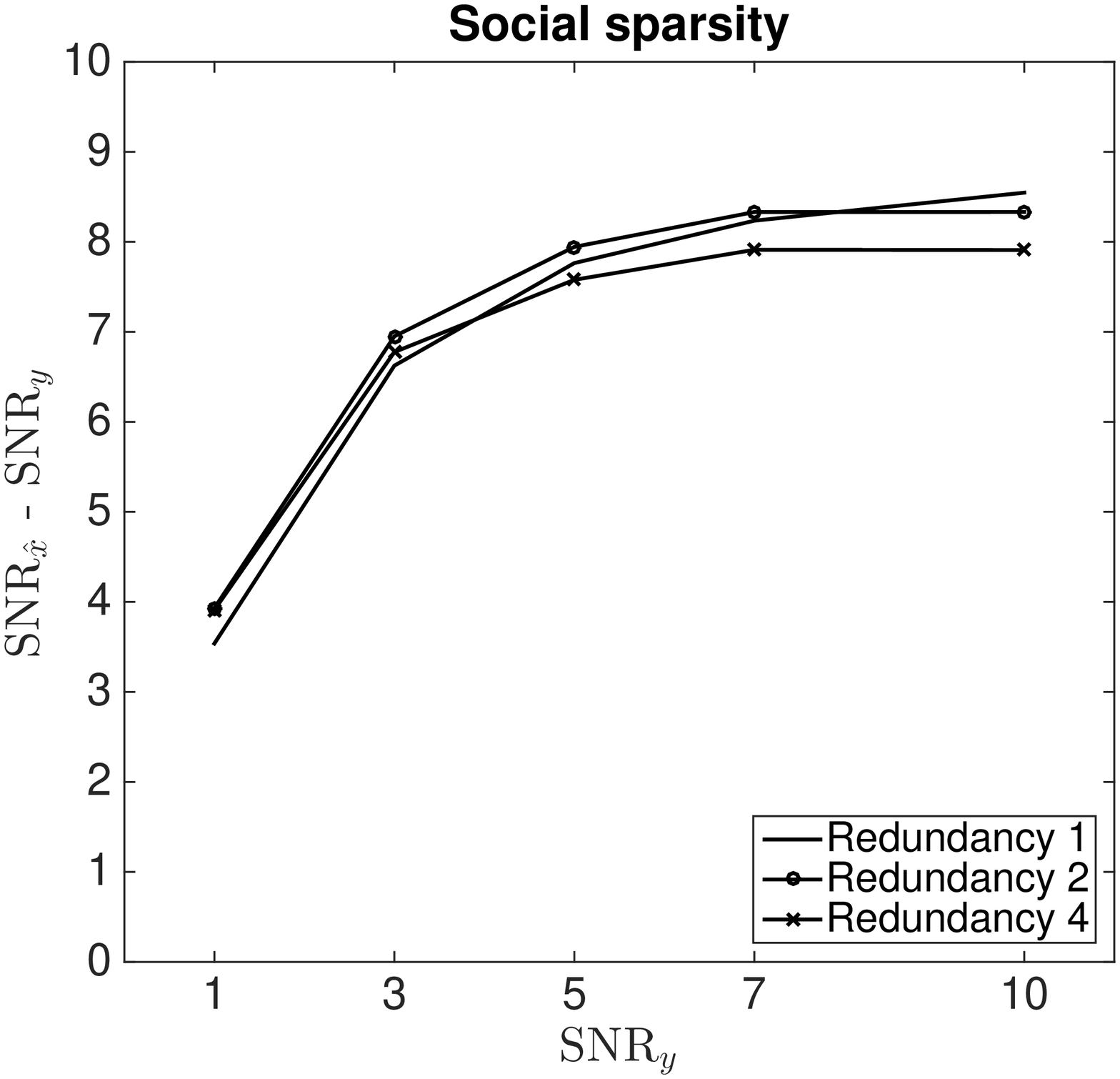}
	\caption{Declipping performance in terms of the $\text{SDR}$ improvement.}\label{figResults}
\end{figure}

Future work will be dedicated to theoretical analysis of the algorithm, with emphasis on convergence. A possible extension is envisioned by introducing structured (co)sparsity priors in the presented algorithmic framework.

\bibliographystyle{abbrv}
\bibliography{Declipping}

\end{document}